\newtheorem{theorem}{Theorem}
\newtheorem{lemma}{Lemma}
\newtheorem{definition}{Definition}
\newtheorem{remark}{Remark}
\title{On the Complexity of the Ordered Covering Problem in Distance Geometry}
\author{
Michael Souza$^{1}$ \and J\'ulio Ara\'ujo$^{1}$ \and John Kesley Costa$^{1}$ \and Carlile Lavor$^{2}$\\
\\
$^{1}$ Federal University of Cear\'a, Fortaleza, Brazil\\
$^{2}$ University of Campinas (IMECC-UNICAMP), Campinas, Brazil
}
\date{}
\begin{document}

\maketitle

\noindent\textbf{Keywords:} Distance Geometry, Ordered Covering Problem, NP-completeness, 3-Partition, DMDGP

\begin{abstract}
The Ordered Covering Problem (OCP) arises in the context of the Discretizable Molecular Distance Geometry Problem (DMDGP), where the ordering of pruning edges significantly impacts the performance of the SBBU algorithm for protein structure determination. In recent work, Souza et al. (2023) formalized OCP as a hypergraph covering problem with ordered, exponential costs, and proposed a greedy heuristic that outperforms the original SBBU ordering by orders of magnitude. However, the computational complexity of finding optimal solutions remained open. In this paper, we prove that OCP is NP-complete through a polynomial-time reduction from the strongly NP-complete 3-Partition problem. Our reduction constructs a tight budget that forces optimal solutions to correspond exactly to valid 3-partitions. This result establishes a computational barrier for optimal edge ordering and provides theoretical justification for the heuristic approaches currently used in practice.
\end{abstract}

\section{Introduction}

The Distance Geometry Problem (DGP) is fundamental to computational structural biology: given a weighted graph $G=(V,E,d)$ with distance function $d: E \to \mathbb{R}_+$ and a positive integer $K$, find a realization $x: V \to \mathbb{R}^K$ such that $\|x_u - x_v\| = d_{uv}$ for all $\{u,v\} \in E$ \cite{liberti2017euclidean,liberti2014euclidean,crippen1988distance}. Saxe~\cite{saxe1979embeddability} proved that the DGP is strongly NP-hard, even for fixed dimension $K$. This problem has extensive applications in determining three-dimensional protein structures from experimental distance data obtained via Nuclear Magnetic Resonance (NMR) spectroscopy \cite{wuthrich1989protein}.

For arbitrary graphs, the DGP typically requires search algorithms in continuous space \cite{liberti2014euclidean}. However, certain protein-derived graphs possess special structural properties that enable discrete solution methods. In particular, protein backbone graphs exhibit a natural ordering of atoms along the molecular chain, where each atom's position can be determined from a finite set of candidate positions based on its predecessors \cite{lavor2019minimal}. This discretization property led to the development of the Discretizable Molecular Distance Geometry Problem (DMDGP) \cite{lavor2012discretizable}, a subclass of DGP instances characterized by vertex orders satisfying specific adjacency constraints.

The DMDGP framework has proven remarkably effective for protein backbone determination. When the graph has an appropriate vertex order, the solution space becomes discrete, and a Branch-and-Prune (BP) algorithm \cite{liberti2008branch} can systematically explore all feasible molecular conformations. The BP algorithm constructs solutions incrementally, branching on the possible positions of each atom and pruning configurations that violate distance constraints. While exponential in the worst case, BP remains practical for proteins due to effective pruning from additional distance constraints.

Recently, Gon\c{c}alves et al.\ \cite{goncalves2021new} introduced the SBBU algorithm, which represents a significant advance over the classical BP approach. Rather than exploring a single global search tree, SBBU decomposes the DMDGP into a sequence of smaller subproblems, each associated with a pruning edge (a distance constraint beyond those required for discretization). The algorithm solves these subproblems sequentially, and crucially, binary variables from solved subproblems can be eliminated from subsequent ones through symmetry exploitation \cite{mucherino2012exploiting}.

The computational efficiency of SBBU depends critically on the order in which pruning edges are processed. Given a permutation $\pi = (e_1, \ldots, e_m)$ of the $m$ pruning edges, SBBU solves the sequence $(P(e_1), \ldots, P(e_m))$ of feasibility subproblems. Each subproblem $P(e_i)$ involves a set of binary variables whose cardinality determines an exponential search space. The ordering determines which variables can be eliminated early, dramatically affecting the total computational cost.

Souza et al.\ \cite{souza2023ordered} demonstrated this impact empirically: on a test set of 5,000 randomly generated instances, the original SBBU edge ordering performed on average 1,300 times worse than optimal orderings, with some instances showing gaps exceeding 6,000-fold. They formalized the edge ordering problem as the Ordered Covering Problem (OCP), viewing the pruning edges as hyperedges covering segments of binary variables, where the order determines which variables remain active in each subproblem. Their proposed greedy heuristic achieved near-optimal performance (within 0.1\% on average), but the fundamental computational complexity of finding optimal solutions remained an open question.

The complexity of vertex ordering problems in distance geometry has been extensively studied. Cassioli et al.\ \cite{cassioli2015discretization} analyzed three types of discretization vertex orders, establishing their NP-completeness and inclusion relationships:

\begin{enumerate}
\item The \textbf{Trilateration Ordering Problem (TOP)} asks whether a graph admits a DDGP order, where each vertex beyond an initial $K$-clique is adjacent to at least $K$ predecessors. Cassioli et al.\ proved TOP is NP-complete by reduction from Clique. However, for fixed $K$, TOP becomes polynomial-time solvable \cite{lavor2012discretization}.
\item The \textbf{Contiguous Trilateration Ordering Problem (CTOP)} requires the $K$ adjacent predecessors to be contiguous in the order. This defines the stricter ${}^K$DMDGP class, which has favorable symmetry properties \cite{lavor2012discretizable}. Cassioli et al.\ showed CTOP is NP-complete for any fixed $K$ by reduction from Hamiltonian Path. MacNeil and Bodur \cite{macneil2022constraint} later developed constraint programming approaches that outperform integer programming formulations for CTOP.
\item The \textbf{Re-Order Problem (ReOP)} allows vertex repetitions in the order, enabling discretization of more instances. Cassioli et al.\ established ReOP is NP-complete for any fixed $K$ by restriction from CTOP.
\end{enumerate}

These results establish the inclusionwise relationship: CTOP $\subsetneq$ ReOP $\subsetneq$ TOP, with all three problems being NP-complete. More recently, Lavor et al.\ \cite{lavor2019polynomiality} showed an interesting dichotomy: the ReOP is NP-complete for $K=1$ (essentially Hamiltonian Path), but belongs to $\mathbf{P}$ for any fixed $K \geq 2$, with a polynomial-time algorithm running in $O(|V|^{2K})$ time.

The OCP differs fundamentally from these vertex ordering problems. While TOP, CTOP, and ReOP concern the order of graph vertices with local adjacency constraints, OCP concerns the order of pruning edges (or more generally, hyperedges in a hypergraph) with a global, exponential cost function. The cost structure arises from the exponential growth of search spaces with respect to the number of active binary variables.

Formally, an OCP instance consists of a set $S$ of labels with weights, a family $\mathcal{E}$ of subsets covering $S$, and a budget $C$. An ordered covering is a sequence $\mathcal{E}' = (E'_1, \ldots, E'_k)$ of sets from $\mathcal{E}$ that collectively cover $S$. The cost is defined by residual sets: $U_i = E'_i \setminus \bigcup_{j<i} E'_j$ represents elements first covered by $E'_i$, and the cost contribution is $f(E'_i) = 2^{u_i}$ where $u_i = \sum_{x \in U_i} \text{val}(x)$. The total cost is $F(\mathcal{E}') = \sum_i f(E'_i)$. The decision problem asks: does there exist an ordered covering with $F(\mathcal{E}') \leq C$? Unless stated otherwise, all integers in the input (weights and budget) are encoded in binary.

The exponential cost function fundamentally distinguishes OCP from classical covering problems like Set Cover or Hitting Set, which have linear or polynomial costs \cite{karp1972reducibility,garey1979computers,chvatal1979greedy}. Moreover, the order dependence means that simply finding a minimal cover is insufficient---the sequence matters critically. Taken together, these structural features might suggest that OCP is computationally tractable, perhaps even admitting a greedy algorithm or a dynamic programming approach.

In this paper, we resolve the complexity of OCP by proving it is NP-complete. Our main result is:

\begin{theorem}
The Ordered Covering Problem is NP-complete.
\end{theorem}

We establish this through a polynomial-time reduction from the 3-Partition problem, which is known to be strongly NP-complete \cite{garey1979computers}. The reduction uses a carefully calibrated exponential cost structure to separate valid from invalid solutions, imposes a tight budget that is met exactly when a 3-partition exists, and builds opening and assignment edges that force any feasible solution to respect the triplet structure of the 3-partition instance.

The proof relies on three key lemmas establishing that any feasible solution must have a rigid structure: the number of active assignment edges equals the number of bins in the corresponding 3-Partition instance, each preceded by its opening edge, each contributing exactly three new labels with disjoint coverage. These structural constraints ensure a bijection between OCP solutions and 3-partitions.

Note that although we reduce from the strongly NP-complete 3-Partition problem, the budget parameter in our construction may be exponential in the size of the 3-Partition instance. Consequently, this reduction establishes NP-completeness of OCP but does not establish strong NP-completeness; we return to this point when describing the reduction in Section~3.

Our NP-completeness result provides theoretical justification for the heuristic approach taken by Souza et al.\ \cite{souza2023ordered}. The greedy heuristic's near-optimal performance is not merely a practical convenience but a well-motivated strategy given the fundamental computational barrier. The result also explains why the original SBBU ordering can be arbitrarily suboptimal: no polynomial-time algorithm can guarantee optimality, assuming P $\neq$ NP.

From a broader perspective, our result adds to the landscape of complexity results in distance geometry. While vertex ordering problems (TOP, CTOP, ReOP) are NP-complete via reductions from Clique and Hamiltonian Path, edge ordering (OCP) is NP-complete via reduction from 3-Partition, highlighting the distinct computational structure of the two problem types.

The remainder of this paper is organized as follows. Section 2 provides preliminaries, including formal definitions of OCP and 3-Partition. Section 3 presents our main result: the NP-completeness proof via reduction from 3-Partition, including the reduction construction and correctness proof with three key lemmas. Section 4 concludes with a discussion of implications and open questions.

\section{Preliminaries}

\subsection{The Ordered Covering Problem}

Souza et al.\ \cite{souza2023ordered} originally define the OCP as a minimization problem. In this work, we study the associated decision problem, which is the standard formulation used to prove NP-completeness. The optimization and decision versions are polynomially equivalent, so NP-completeness of the decision version implies NP-hardness of the original optimization problem.

\begin{definition}[OCP Instance]
An instance of the Ordered Covering Problem consists of:
\begin{enumerate}
\item A finite set $S = \{s_1, \ldots, s_n\}$ of \emph{labels};
\item A family $\mathcal{E} = \{E_1, \ldots, E_m\}$ of subsets with $S \subset \bigcup_{i=1}^m E_i$;
\item A \emph{weight function} $\text{val}: \bigcup_{i=1}^m E_i \to \mathbb{N}$ assigning a positive integer weight to every element that appears in some edge;
\item A \emph{budget} $C \in \mathbb{N}$.
\end{enumerate}
\end{definition}
In the DMDGP context, $S$ represents segments of binary variables, $\mathcal{E}$ represents hyperedges corresponding to pruning edges in the distance geometry graph, and the exponential cost reflects the size of the search space for each subproblem in the SBBU algorithm \cite{souza2023ordered}.

\begin{definition}[Ordered Covering]
An \emph{ordered covering} (or simply \emph{covering}) of $S$ is a tuple $\mathcal{E}' = (E'_1, \ldots, E'_k)$ where each $E'_i \in \mathcal{E}$ and $S \subset \bigcup_{i=1}^k E'_i$.
\end{definition}

\begin{definition}[Residual Sets and Weights]
Given a covering $\mathcal{E}' = (E'_1, \ldots, E'_k)$, we define for each $i \in \{1, \ldots, k\}$:
\begin{enumerate}
\item The \emph{residual set}: $U_i = E'_i \setminus \bigcup_{j=1}^{i-1} E'_j$ (with $U_1 = E'_1$);
\item The \emph{residual weight}: $u_i = \sum_{x \in U_i} \text{val}(x)$.
\end{enumerate}
The residual set $U_i$ represents the labels that are covered for the first time by $E'_i$.
\end{definition}

\begin{definition}[Cost Function]
The \emph{partial cost} of $E'_i$ in the covering $\mathcal{E}'$ is:
$$f(E'_i) = \begin{cases} 2^{u_i} & \text{if } u_i > 0, \\ 0 & \text{if } u_i = 0. \end{cases}$$

The \emph{total cost} of the covering $\mathcal{E}'$ is:
$$F(\mathcal{E}') = \sum_{i=1}^k f(E'_i).$$
\end{definition}

The exponential cost function distinguishes OCP from classical covering problems. In Set Cover, for instance, the cost is simply the number of sets selected (or their cardinality) \cite{johnson1973approximation}. In OCP, the cost grows exponentially with the weight of elements covered, and crucially, the order determines which elements contribute to which residual sets.

\begin{definition}[Ordered Covering Problem - Decision Version]
Given an OCP instance $(S, \text{val}, \mathcal{E}, C)$, the decision problem asks:

\emph{Does there exist an ordered covering $\mathcal{E}'$ with $F(\mathcal{E}') \leq C$?}
\end{definition}

\subsection{The 3-Partition Problem}

Our reduction uses the 3-Partition problem, a classical strongly NP-complete problem. Garey and Johnson~\cite{garey1975complexity} originally proved 3-Partition to be NP-complete by a reduction from 3-dimensional matching; see also~\cite{garey1979computers}. The ``strongly'' NP-complete designation means the problem remains NP-complete even when all numbers are bounded by a polynomial in the input size.

\begin{definition}[3-Partition]
An instance of 3-Partition consists of:
\begin{enumerate}
\item A \emph{multiset} $A = \{a_1, \ldots, a_{3m}\}$ of positive integers;
\item A positive integer $B$ such that: (a) $\sum_{i=1}^{3m} a_i = mB$ (the sum equals $m$ times $B$); (b) $\frac{B}{4} < a_i < \frac{B}{2}$ for all $i \in \{1, \ldots, 3m\}$.
\end{enumerate}
The decision problem asks: \emph{Does there exist a partition $\mathcal{P} = \{P_1, \ldots, P_m\}$ of $A$ into $m$ triplets such that $\sum_{a \in P_i} a = B$ for all $i \in \{1, \ldots, m\}$?}
\end{definition}

The constraint $B/4 < a_i < B/2$ ensures that each triplet in a valid partition contains exactly three distinct elements---no pair sums to $B$, and no element can appear twice or with a fourth element. This ``forced triplet'' structure is essential to our reduction.

Note that $A$ is a \emph{multiset}, meaning that multiple elements can have the same value. However, elements are distinguishable by their labels (indices). When we construct a partition, we partition the elements themselves (by their labels), not merely their values. This distinction will be important in our reduction, where we use labeled tokens to represent multiset elements.

\begin{theorem}[Garey and Johnson \cite{garey1975complexity,garey1979computers}]
3-Partition is strongly NP-complete.
\end{theorem}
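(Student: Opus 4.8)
The plan is to establish both membership in NP and NP-hardness, the latter by a polynomial-time reduction from 3-Dimensional Matching (3DM), itself a classical NP-complete problem. Membership is immediate: a candidate partition $\mathcal{P} = \{P_1, \ldots, P_m\}$ is a polynomial-size certificate, and verifying that each $P_i$ is a triplet with $\sum_{a \in P_i} a = B$ takes polynomial time. For hardness I would start from a 3DM instance with ground sets $W, X, Y$ of common size $q$ and a triple set $M \subseteq W \times X \times Y$ with $|M| = k$, where a yes-instance is a subfamily of $q$ triples covering every ground element exactly once.

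The heart of the reduction is an encoding under which ``summing to $B$ inside a single triplet'' simulates ``choosing one triple to cover a given element.'' I would fix a base $r$ polynomial in the instance size (e.g. $r = 32k$, chosen so that $r$ exceeds the number of ground elements and of triples) and represent each constructed integer as a fixed, constant number of base-$r$ digit fields: one field identifying a ground element, one identifying a triple, and reserved fields acting as offsets. Since each field occupies a single digit and there are $O(1)$ of them, every integer is bounded by $r^{O(1)} = k^{O(1)}$ and is therefore polynomially bounded. This is exactly what the \emph{strong} form of the theorem demands: the construction must avoid the exponential blow-up that a naive characteristic-vector encoding (one digit per ground element) would incur.

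I would then introduce, alongside one ``selector'' element per triple of $M$, a controlled family of filler elements, and calibrate a common additive offset so that every constructed integer lies strictly in $(B/4, B/2)$. The forced-triplet property guaranteed by this bound is what localizes the argument: each partition block contains exactly three elements, and the per-block identity $\sum = B$ forces each block to encode a consistent local decision, namely that a particular element of $W \cup X \cup Y$ is covered by one of the triples containing it. Summed over all $m$ blocks these local decisions must cover each ground element exactly once, yielding a perfect 3DM matching, while conversely a matching is easily converted into a valid partition. The correctness proof then splits into the two implications together with a routine check of the side conditions $\sum a_i = mB$ and $B/4 < a_i < B/2$.

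The main obstacle, and the delicate part of the construction, is the simultaneous reconciliation of three competing requirements: the digit fields must be wide enough that additions within a block never generate carries that corrupt a neighbouring field (so that a block sum of $B$ genuinely certifies field-by-field agreement); the filler elements must be designed so that both the $q$ selected-triple blocks and the remaining blocks can always be completed to legal triplets summing to $B$; and every integer must still fit inside the narrow window $(B/4, B/2)$ forced by the definition of 3-Partition. Balancing the no-carry condition, which pushes the base up, against polynomial boundedness, which caps both the base and the number of fields, is precisely what separates a proof of strong NP-completeness from one of merely weak NP-completeness, and is where I expect the bulk of the technical effort to lie.
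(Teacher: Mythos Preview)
The paper does not supply a proof of this theorem at all: it is stated as a classical result and attributed, via citation, to Garey and Johnson. In the paper it functions purely as a black box to be used as the source problem for the subsequent reduction to OCP, so there is no ``paper's own proof'' to compare your proposal against.

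That said, your outline is broadly aligned with the classical argument in the cited references: membership in NP is immediate, and hardness is established by encoding a matching-style problem into integers using positional base-$r$ digit fields, with the base chosen large enough to prevent carries and small enough to keep all values polynomially bounded (which is what yields \emph{strong} NP-completeness). One point worth noting is that the original Garey--Johnson proof does not reduce directly from 3-Dimensional Matching to 3-Partition in a single step; it passes through intermediate problems (in their 1975 paper, via 4-Partition, and in the 1979 book via a chain including Numerical 3-Dimensional Matching). A direct one-shot reduction from 3DM with only $O(1)$ digit fields, as you sketch, is more delicate than you suggest: with a constant number of fields you cannot distinguish $\Theta(q)$ ground elements by position, so the ``one field identifying a ground element'' idea as stated would need more care (either more fields, an intermediate problem, or a more intricate gadget for the filler elements). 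Your identification of the three competing constraints (no carries, completable fillers, and the $(B/4,B/2)$ window) is exactly right, and is indeed where the technical work lies in the cited proofs.
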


We use 3-Partition as the source problem, but note that our OCP construction employs exponential cost terms (powers of two). Thus, while the instance encoding size remains polynomial (binary representation), the reduction does not bound all numeric parameters by a polynomial in $m$; accordingly, our proof establishes NP-completeness, not strong NP-completeness, for OCP.

\subsection{Notation and Conventions}

Throughout this paper, we use the following conventions:
\begin{enumerate}
\item $[m] = \{1, 2, \ldots, m\}$ denotes the set of first $m$ positive integers.
\item For a covering $\mathcal{E}' = (E'_1, \ldots, E'_k)$ and index $i$, we write $U(E'_i)$ or simply $U_i$ for the residual set of $E'_i$ in this covering.
\item We use Greek letters $\alpha, \beta, \gamma$ for labels in $S$, and Latin letters $a, b, c$ for values (weights).
\item We distinguish between \emph{labels} (elements of $S$) and their \emph{weights} (values assigned by the function val).
\end{enumerate}

\section{Main Result: NP-Completeness of OCP}

In this section, we prove the main result of this paper.

\begin{theorem}
The Ordered Covering Problem is NP-complete.
\end{theorem}

The proof consists of two parts: showing OCP is in NP (Section 3.1), and reducing 3-Partition to OCP in polynomial time (Sections 3.2-3.5).

\subsection{OCP is in NP}

\begin{proof}[Proof that OCP $\in$ NP]
We must show that OCP solutions can be verified in time polynomial in the input size when integers are given in binary.

Given an OCP instance $(S, \text{val}, \mathcal{E}, C)$ and a candidate solution $\mathcal{E}' = (E'_1, \ldots, E'_k)$, we verify as follows:

\begin{enumerate}
\item \textbf{Structure of the covering:} We may assume the certificate encodes $\mathcal{E}'$ as a sequence of indices in $[m]$, where $\mathcal{E} = \{E_1,\ldots,E_m\}$. Checking that each index lies in $[m]$ takes $O(k)$ arithmetic comparisons. Using a Boolean array over $S$, we then scan all sets $E'_i$ once to mark covered labels and verify that every element of $S$ is covered at least once. This requires $O\!\left(\sum_{i=1}^k |E'_i|\right)$ operations, which is polynomial in the input size.

\item \textbf{Residual sets and weights:} Processing the sets in order and maintaining a second Boolean array of “already seen” labels over $\bigcup_{i=1}^m E_i$, we compute, for each $i$, the residual set $U_i = E'_i \setminus \bigcup_{j<i} E'_j$ in time $O\!\left(\sum_{i=1}^k |E'_i|\right)$. For each $i$, we then compute $u_i = \sum_{x \in U_i} \text{val}(x)$ using big-integer addition. Let $L$ be the maximum bit-length of any weight in the input. Since $|U_i| \le |E'_i| \le \max_{j} |E_j| \le \sum_{j=1}^m |E_j|$, we have
$$u_i = \sum_{x \in U_i} \text{val}(x) \le |U_i| \times \max_{x \in \bigcup_{j=1}^m E_j}(\text{val}(x)) < \left(\sum_{j=1}^m |E_j|\right) \times 2^L,$$
so each $u_i$ has bit-length at most $L + \left\lceil\log_2 \left(\sum_{j=1}^m |E_j|\right)\right\rceil$, and the total time to compute all $u_i$ is polynomial in the input size.

\item \textbf{Cost computation and budget test:} For each $i$, if $u_i > \lfloor\log_2 C\rfloor$ then $f(E'_i) = 2^{u_i} > C$, so we can immediately reject the certificate. Otherwise, $u_i \le \lfloor\log_2 C\rfloor$ and $f(E'_i) = 2^{u_i}$ has at most $O(\log C)$ bits and can be computed as a left shift. Summing these $k$ values and comparing the result to $C$ uses a polynomial number of big-integer operations on $O(\log C)$-bit integers.
\end{enumerate}

Therefore, verification runs in polynomial time in the input size, and OCP $\in$ NP under binary encoding.
\end{proof}

\subsection{Reduction from 3-Partition to OCP}

We now construct a polynomial-time reduction $\Phi$ that maps any 3-Partition instance to an OCP instance such that the 3-Partition instance is a YES instance if and only if the corresponding OCP instance is a YES instance.

Let $(A, B)$ be an instance of 3-Partition, where $A = \{a_1, \ldots, a_{3m}\}$ with $\sum_{i=1}^{3m} a_i = mB$ and $B/4 < a_i < B/2$ for all $i$. We construct an OCP instance $\Phi(A, B) = (S, \text{val}, \mathcal{E}, C)$ as follows.

\textbf{The Label Set $\boldsymbol{S}$.} The set of \emph{primary labels} corresponding to the elements of the multiset $A$ is:
$$S = \{\alpha_1, \alpha_2, \ldots, \alpha_{3m}\},$$
where each label $\alpha_\ell$ has weight $\text{val}(\alpha_\ell) = a_\ell$.

We use distinct labels $\alpha_\ell$ even when values $a_\ell$ repeat in the multiset $A$. This allows us to distinguish between different elements with the same value, which is essential for partitioning the multiset. These are the only labels that must be covered.

\textbf{Valid Triplets.} Define the collection of \emph{valid triplets}:
$$T = \left\{ X \in \binom{S}{3} : \sum_{\alpha \in X} \text{val}(\alpha) = B \right\}.$$

In other words, $T$ consists of all three-element subsets of $S$ whose weights sum to exactly $B$. By the constraint $B/4 < a_i < B/2$, any valid triplet contains exactly three distinct labels. No pair sums to $B$, and no element appears twice. Finally, we have $|T| \leq \binom{3m}{3} = O(m^3)$, which is polynomial in $m$.

\textbf{Auxiliary Tokens and Edge Construction.} For each ``bin'' $i \in [m]$ and each valid triplet $X_j \in T$ (indexed by $j \in [|T|]$), we introduce two distinct \emph{auxiliary tokens}:
\begin{enumerate}
\item An \emph{opening token} $\omega_{ij}$ with weight $\text{val}(\omega_{ij}) = w$;
\item A \emph{closing token} $\tau_{ij}$ with weight $\text{val}(\tau_{ij}) = t$,
\end{enumerate}

where $w$ and $t$ are positive integer constants to be specified below. Crucially, these tokens are distinct for each pair $(i, j)$ and do not belong to $S$:
$$\omega_{ij}, \tau_{ij} \notin S \quad \text{for all } (i, j).$$
The auxiliary tokens lie outside the required coverage set but still contribute to costs through $\text{val}$.

For each pair $(i, j)$ with $i \in [m]$ and $j \in [|T|]$, we define two edges:
\begin{enumerate}
\item \textbf{Opening edge:} $A_{ij} = \{\omega_{ij}\}$;
\item \textbf{Assignment edge:} $E_{ij} = X_j \cup \{\omega_{ij}, \tau_{ij}\}$.
\end{enumerate}

If $X_j = \{\alpha_{j1}, \alpha_{j2}, \alpha_{j3}\}$, then:
$$E_{ij} = \{\alpha_{j1}, \alpha_{j2}, \alpha_{j3}, \omega_{ij}, \tau_{ij}\}.$$

The edge family is:
$$\mathcal{E} = \{A_{ij}, E_{ij} : i \in [m], j \in [|T|]\}.$$

\textbf{Key properties:} (1) Each $\omega_{ij}$ appears in exactly two edges: $A_{ij}$ and $E_{ij}$; (2) Each $\tau_{ij}$ appears in exactly one edge: $E_{ij}$; (3) Only the $E_{ij}$ edges contain labels from $S$; (4) The total number of edges is $2m|T| = O(m^4)$, which is polynomial in $m$.

The elements that appear in edges (and thus receive weights via $\text{val}$) are $S \cup \{\omega_{ij}, \tau_{ij} : i \in [m], j \in [|T|]\}$, whose cardinality is $3m + 2m|T| = O(m^4)$.

\textbf{The Cost Separation Parameter $\boldsymbol{w}$.} The parameter $w$ is chosen to ensure a cost ``separation'' between different solution structures. Specifically, we want to ensure that using more than $m$ opening edges, or using an assignment edge $E_{ij}$ before its opening edge $A_{ij}$, will exceed the budget.

We first define the \emph{canonical cost} that a valid solution should achieve. If we have a 3-partition $\mathcal{P} = \{P_1, \ldots, P_m\}$, we can construct a covering by selecting, for each bin $i$, the opening edge $A_{ij(i)}$ and assignment edge $E_{ij(i)}$ where $X_{j(i)} = P_i$ (viewing $P_i$ as a subset of labels from $S$). The covering sequence is:
$$(A_{1j(1)}, E_{1j(1)}, A_{2j(2)}, E_{2j(2)}, \ldots, A_{mj(m)}, E_{mj(m)}).$$

In this sequence: (1) Each $A_{ij(i)}$ covers $\omega_{ij(i)}$ with residual weight $w$, contributing cost $2^w$; (2) Each $E_{ij(i)}$ covers $\tau_{ij(i)}$ and the three labels of $P_i$, with residual weight $t + \sum_{\alpha \in P_i} \text{val}(\alpha) = t + B$, contributing cost $2^{t+B}$.

Thus, the total cost of this canonical covering is:
$$F_{\text{can}} = m \cdot 2^w + m \cdot 2^{t+B} = m(2^w + 2^{t+B}).$$

We set the budget $C = F_{\text{can}} = m(2^w + 2^{t+B})$.

Now, we choose $w$ large enough so that exceeding $m$ openings or misusing an assignment edge will violate the budget. Specifically, we require:
$$(m+1) \cdot 2^w > C = m(2^w + 2^{t+B}).$$

Simplifying:
$$(m+1) \cdot 2^w > m \cdot 2^w + m \cdot 2^{t+B},$$
$$2^w > m \cdot 2^{t+B},$$
$$w > t + B + \log_2 m.$$

Thus, we set:
$$w = t + B + \lceil \log_2 m \rceil + 1.$$

The weight $w = O(B + \log m)$ is polynomially bounded. Although the budget $C = m(2^w + 2^{t+B})$ is exponentially large, its bit-length is $O(w) = O(B + \log m)$, which is polynomial in the input size.

We fix $t = 1$ for simplicity (any positive integer works).

\textbf{Summary of the Reduction.} The OCP instance $\Phi(A, B) = (S, \text{val}, \mathcal{E}, C)$ is fully specified:
\begin{enumerate}
\item \textbf{Label set:} $S = \{\alpha_1, \ldots, \alpha_{3m}\}$ with $|S| = 3m$; the union of edge elements is $S \cup \{\omega_{ij}, \tau_{ij} : i \in [m], j \in [|T|]\}$.
\item \textbf{Weight function:} $\text{val}(\alpha_\ell) = a_\ell$, $\text{val}(\omega_{ij}) = w$, $\text{val}(\tau_{ij}) = t = 1$.
\item \textbf{Edge family:} $\mathcal{E} = \{A_{ij}, E_{ij} : i \in [m], j \in [|T|]\}$ with $|\mathcal{E}| = O(m^4)$.
\item \textbf{Budget:} $C = m(2^w + 2^{t+B})$.
\end{enumerate}

The reduction is clearly computable in polynomial time: computing $T$ takes $O(m^3)$ time, constructing edges takes $O(m^4)$ time, and computing $w$ and $C$ takes polynomial time in the bit-lengths of $B$ and $m$.

\subsection{Correctness: Completeness Direction ($\Rightarrow$)}

We first show that if the 3-Partition instance is a YES instance, then the OCP instance is a YES instance.

\begin{theorem}[Completeness]
If $(A, B)$ is a YES instance of 3-Partition, then $\Phi(A, B)$ is a YES instance of OCP, with an optimal covering achieving cost exactly $C$.
\end{theorem}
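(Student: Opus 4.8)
The plan is to prove the completeness direction constructively: given a valid 3-partition, exhibit an explicit ordered covering of the OCP instance $\Phi(A,B)$ whose total cost is exactly $C$. Since the reduction already defines a canonical covering associated with any 3-partition, the bulk of the argument is to verify that this candidate covering is legitimate and that its cost computes to the claimed value $C = m(2^w + 2^{t+B})$.

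First I would fix a valid 3-partition $\mathcal{P} = \{P_1, \ldots, P_m\}$ with $\sum_{\alpha \in P_i}\text{val}(\alpha) = B$ for every bin $i$. For each $i$, since $P_i$ is a triplet summing to $B$, it is by definition a valid triplet, so there is a unique index $j(i) \in [|T|]$ with $X_{j(i)} = P_i$. I would then present the canonical covering
$$\mathcal{E}' = (A_{1j(1)}, E_{1j(1)}, A_{2j(2)}, E_{2j(2)}, \ldots, A_{mj(m)}, E_{mj(m)}),$$
and check the covering condition: the assignment edges $E_{ij(i)}$ contain exactly the labels of $P_i$, and since $\mathcal{P}$ is a partition of $A$, the sets $P_1, \ldots, P_m$ together contain every $\alpha_\ell \in S$. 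Hence $S \subset \bigcup_i E'_i$, so $\mathcal{E}'$ is a valid ordered covering.

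Next I would compute the residual sets and weights term by term along the sequence. Because the opening tokens $\omega_{ij}$ and closing tokens $\tau_{ij}$ are distinct across pairs $(i,j)$, and because distinct bins use disjoint triplets $P_i$, no label is shared between the $2m$ edges except for the single overlap between $A_{ij(i)}$ and $E_{ij(i)}$ (both contain $\omega_{ij(i)}$). Concretely: each $A_{ij(i)} = \{\omega_{ij(i)}\}$ introduces the opening token for the first time, so its residual weight is $w$, contributing cost $2^w$; then each $E_{ij(i)}$ has residual set $\{\tau_{ij(i)}\} \cup P_i$ (the token $\omega_{ij(i)}$ was already covered by the preceding opening edge, and the three labels of $P_i$ are fresh since they were not covered by any earlier edge), giving residual weight $t + \sum_{\alpha \in P_i}\text{val}(\alpha) = t + B$ and cost $2^{t+B}$. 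Summing over the $m$ bins yields
$$F(\mathcal{E}') = \sum_{i=1}^m \left(2^w + 2^{t+B}\right) = m(2^w + 2^{t+B}) = C.$$

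This direction is mostly mechanical, so I do not expect a genuine obstacle; the one point demanding care is the disjointness bookkeeping for the residual sets. I would argue it cleanly by invoking the three key properties of the construction (each $\tau_{ij}$ lives in a unique edge, each $\omega_{ij}$ lives in exactly $A_{ij}$ and $E_{ij}$, and the $P_i$ are pairwise disjoint as a partition), which guarantees that every label is first covered exactly where the residual-weight computation claims. Having shown a feasible covering of cost exactly $C$, and noting that by the definition of the budget no covering can cost less than $C$ without violating the structural lower bounds established later (so this covering is in fact optimal), I conclude that $\Phi(A,B)$ is a YES instance achieving cost exactly $C$.
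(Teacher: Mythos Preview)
Your proposal is correct and follows essentially the same approach as the paper: both exhibit the canonical covering $(A_{1j(1)}, E_{1j(1)}, \ldots, A_{mj(m)}, E_{mj(m)})$ induced by a valid 3-partition, verify coverage of $S$ via the partition property, and compute each residual weight as $w$ (opening edges) and $t+B$ (assignment edges) to obtain total cost $m(2^w + 2^{t+B}) = C$. The only cosmetic difference is that the paper introduces the intermediate notation $Q_i = \{\alpha_\ell : a_\ell \in P_i\}$ to distinguish elements of $A$ from labels in $S$, whereas you implicitly identify $P_i$ with its label set; this does not affect the argument.
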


\begin{proof}
Suppose $\mathcal{P} = \{P_1, \ldots, P_m\}$ is a 3-partition of $A$, where each $P_i = \{a_{i1}, a_{i2}, a_{i3}\}$ (viewing $P_i$ as a set of element labels from $A$) satisfies $\sum_{a \in P_i} a = B$.

For each $i \in [m]$, let $Q_i \subseteq S$ be the set of labels corresponding to $P_i$:
$$Q_i = \{\alpha_\ell : a_\ell \in P_i\}.$$

Since $\mathcal{P}$ is a partition of $A$ by element labels, the sets $Q_1, \ldots, Q_m$ are pairwise disjoint and their union is $S$:
$$Q_i \cap Q_{i'} = \emptyset \text{ for } i \neq i', \quad \bigcup_{i=1}^m Q_i = S.$$

Moreover, $\sum_{\alpha \in Q_i} \text{val}(\alpha) = B$ for each $i$.

For each $i \in [m]$, there exists an index $j(i) \in [|T|]$ such that $X_{j(i)} = Q_i$, because $Q_i$ is a three-element subset of $S$ with total weight $B$, hence $Q_i \in T$.

We construct the covering:
$$\mathcal{E}_{\text{can}} = (A_{1j(1)}, E_{1j(1)}, A_{2j(2)}, E_{2j(2)}, \ldots, A_{mj(m)}, E_{mj(m)}).$$

\textbf{Analysis of costs:}
\begin{enumerate}
\item \textbf{Bin 1:} When we apply $A_{1j(1)} = \{\omega_{1j(1)}\}$, the residual set is $U(A_{1j(1)}) = \{\omega_{1j(1)}\}$ with weight $w$, so $f(A_{1j(1)}) = 2^w$. Next, when we apply $E_{1j(1)} = Q_1 \cup \{\omega_{1j(1)}, \tau_{1j(1)}\}$, the residual set is $U(E_{1j(1)}) = Q_1 \cup \{\tau_{1j(1)}\}$ (since $\omega_{1j(1)}$ was already covered), with weight $\sum_{\alpha \in Q_1} \text{val}(\alpha) + t = B + t$. Thus $f(E_{1j(1)}) = 2^{B+t}$.
\item \textbf{Bin $i$ (general):} Similarly, $f(A_{ij(i)}) = 2^w$ and $f(E_{ij(i)}) = 2^{B+t}$.
\item \textbf{Coverage:} The sets $Q_1, \ldots, Q_m$ are disjoint and cover $S$. Each $\omega_{ij(i)}$ is covered by $A_{ij(i)}$, and each $\tau_{ij(i)}$ is covered by $E_{ij(i)}$. Thus, all labels in $S$ are covered.
\item \textbf{Total cost:} $F(\mathcal{E}_{\text{can}}) = \sum_{i=1}^m (f(A_{ij(i)}) + f(E_{ij(i)})) = \sum_{i=1}^m (2^w + 2^{B+t}) = m(2^w + 2^{B+t}) = C$.
\end{enumerate}

Thus, $\mathcal{E}_{\text{can}}$ is a feasible covering with $F(\mathcal{E}_{\text{can}}) \leq C$.
\end{proof}

\subsection{Correctness: Soundness Direction ($\Leftarrow$)}

We now show that if the OCP instance is a YES instance, then the 3-Partition instance is a YES instance. This is the more involved direction, requiring three key lemmas about the structure of optimal coverings.

\begin{theorem}[Soundness]\label{thm:soundness}
If $\Phi(A, B)$ is a YES instance of OCP (i.e., there exists a covering $\mathcal{E}'$ with $F(\mathcal{E}') \leq C$), then $(A, B)$ is a YES instance of 3-Partition.
\end{theorem}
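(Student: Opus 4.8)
The plan is to show that the budget forces any feasible covering into essentially the canonical shape of the completeness direction, after which reading off the triplets yields a 3-partition. First I would normalize the covering: any repeated edge, and any edge whose residual set is empty, contributes cost $0$, so I may discard all such occurrences without changing coverage or cost; thereafter each retained edge occurs once and has positive residual weight. The decisive structural fact is that, among all token weights, only the opening tokens are ``heavy'': since $B/4 < a_\ell < B/2$, the residual weight of any assignment edge $E_{ij}$ that does \emph{not} newly cover its own opening token is at most $t + \sum_{\alpha \in X_j}\text{val}(\alpha) = t + B < w$. Hence a retained edge has cost $2^{u_i} \ge 2^w$ if and only if it is either an opening edge $A_{ij}$ covering $\omega_{ij}$ for the first time (cost exactly $2^w$) or an assignment edge $E_{ij}$ whose opening token $\omega_{ij}$ is still uncovered when $E_{ij}$ is applied (cost $\ge 2^{w+t}$). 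I would name these classes $\mathcal{A}$ and $\mathcal{B}$ respectively, and let $\mathcal{C}$ denote the remaining assignment edges, those whose $\omega_{ij}$ was already covered.

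Next I would run a counting squeeze. Every large-cost edge contributes at least $2^w$, and the defining inequality $(m+1)2^w > C$ then caps their number: $|\mathcal{A}| + |\mathcal{B}| \le m$. Because only assignment edges meet $S$ and each carries exactly the three labels of its triplet, covering all $3m$ labels requires at least $m$ assignment edges, giving $|\mathcal{B}| + |\mathcal{C}| \ge m$. The bridge between the two bounds is that $\omega_{ij}$ occurs only in $A_{ij}$ and $E_{ij}$: each edge of $\mathcal{C}$ must have been preceded by its \emph{own} opening edge $A_{ij}\in\mathcal{A}$, and the map $E_{ij}\mapsto A_{ij}$ is injective, so $|\mathcal{C}|\le|\mathcal{A}|$. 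Chaining $m \le |\mathcal{B}|+|\mathcal{C}| \le |\mathcal{B}|+|\mathcal{A}| \le m$ collapses everything to equalities: there are exactly $m$ assignment edges, $|\mathcal{A}|=|\mathcal{C}|$, and $|\mathcal{A}|+|\mathcal{B}|=m$. Moreover $m$ triplets of size three covering $3m$ distinct labels must be pairwise disjoint, so the chosen triplets already partition $S$.

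It remains to rule out $\mathcal{B}$, i.e.\ to show every assignment edge is genuinely opened first. Here I would use the quantitative strength of $w$ rather than mere counting. Substituting $|\mathcal{A}| = m-|\mathcal{B}|$ into $|\mathcal{A}|2^w + |\mathcal{B}|2^{w+t} \le F(\mathcal{E}') \le C = m2^w + m2^{t+B}$ and simplifying gives $|\mathcal{B}|\,2^w(2^t-1) \le m\,2^{t+B}$. Since $w = t+B+\lceil\log_2 m\rceil+1$ yields $2^w \ge 2m\,2^{t+B}$, while $2^t-1\ge 1$, the left side is at least $2m|\mathcal{B}|\,2^{t+B}$, forcing $2|\mathcal{B}|\le 1$ and hence $\mathcal{B}=\emptyset$. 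Therefore the covering uses exactly $m$ assignment edges, each preceded by its matching opening edge, with pairwise disjoint triplets partitioning $S$; each triplet is valid, so $\sum_{\alpha\in X_j}\text{val}(\alpha)=B$, and translating labels back to elements via $\alpha_\ell\leftrightarrow a_\ell$ produces $m$ triplets of $A$ each summing to $B$, a valid 3-partition.

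I expect the main obstacle to be this last step eliminating $\mathcal{B}$: the counting squeeze alone still permits trading an opener for an ``expensive'' opener-free assignment edge, and only the precise gap $2^w \ge 2m\,2^{t+B}$ engineered into the choice of $w$ closes that loophole. A secondary care point is the normalization together with the injectivity of $E_{ij}\mapsto A_{ij}$, both of which rely on each $\omega_{ij}$ and $\tau_{ij}$ appearing in only a strictly limited set of edges; I would record these adjacency facts explicitly before the counting argument so that the squeeze is airtight.
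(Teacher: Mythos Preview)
Your argument is correct, but it is organized differently from the paper and in one respect is actually tighter. The paper proceeds through three sequential lemmas: first it proves that every active assignment edge is preceded by its opener (Lemma~\ref{lem:opening}), then uses this to bound the number of active assignment edges by counting openers against the budget (Lemma~\ref{lem:exactly-m}), and finally derives the disjoint triplet structure (Lemma~\ref{lem:triplet}). You instead set up the three-way classification $\mathcal{A},\mathcal{B},\mathcal{C}$ and run the squeeze $m \le |\mathcal{B}|+|\mathcal{C}| \le |\mathcal{B}|+|\mathcal{A}| \le m$, where the middle inequality comes from the injection $E_{ij}\mapsto A_{ij}$. This yields exactly $m$ assignment edges, and hence the disjoint partition of $S$ into valid triplets, \emph{without} first establishing that $\mathcal{B}=\emptyset$.

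Your final step eliminating $\mathcal{B}$ via the inequality $|\mathcal{B}|\,2^w(2^t-1)\le m\,2^{t+B}$ is correct, but note that it is logically superfluous for the soundness theorem: once you know the $m$ assignment edges carry pairwise disjoint triplets from $T$ partitioning $S$, the 3-partition of $A$ follows immediately, regardless of whether those edges were opened. What your route buys is that the count of active assignment edges is obtained directly from the squeeze, whereas the paper must first prove the opening-precedes-assignment lemma to get the same count; conversely, the paper's organization makes the ``canonical shape'' of feasible coverings explicit as a separate structural fact, which you recover only as an afterthought.
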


The proof relies on three lemmas that establish the structure of any feasible covering.

\begin{lemma}[Opening Precedes Assignment]\label{lem:opening}
In any covering $\mathcal{E}' = (E'_1, \ldots, E'_k)$ with $F(\mathcal{E}') \leq C$, every assignment edge $E_{ij}$ that appears with positive residual weight (i.e., $U(E_{ij}) \neq \emptyset$) is preceded in the sequence by its opening edge $A_{ij}$.
\end{lemma}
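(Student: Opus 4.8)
The plan is to prove the contrapositive-flavored claim directly by a cost-accounting argument: I would suppose, for contradiction, that some assignment edge $E_{ij}$ appears in the covering $\mathcal{E}'$ with $U(E_{ij}) \neq \emptyset$ but is \emph{not} preceded by its opening edge $A_{ij}$. The key structural observation is that the only two edges containing the opening token $\omega_{ij}$ are $A_{ij}$ and $E_{ij}$ itself (by the ``Key properties'' of the construction). Hence if $A_{ij}$ does not appear before $E_{ij}$, then $\omega_{ij}$ has not been covered by any earlier edge, so $\omega_{ij} \in U(E_{ij})$.

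From there the argument is a direct lower bound on the residual weight. Since $\text{val}(\omega_{ij}) = w$ and weights are positive integers, we get $u(E_{ij}) = \sum_{x \in U(E_{ij})} \text{val}(x) \ge \text{val}(\omega_{ij}) = w$. Therefore the partial cost of this single edge satisfies $f(E_{ij}) = 2^{u(E_{ij})} \ge 2^w$. The plan is then to compare this against the budget using the separation inequality built into the choice of $w$. Recall $w = t + B + \lceil \log_2 m \rceil + 1$ was chosen precisely so that $2^w > m \cdot 2^{t+B}$, which gives $(m+1) 2^w > m(2^w + 2^{t+B}) = C$. I would argue that any feasible covering must already incur cost at least $m \cdot 2^w$ from covering the $m$ distinct opening tokens needed to eventually cover all of $S$ — but this requires care, so I would instead take the cleaner route below.

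The cleaner accounting I would actually carry out: to cover all $3m$ labels of $S$, the covering must include assignment edges whose triplets collectively cover $S$; since each $X_j$ has exactly three labels and the $\alpha_\ell$ are distinct, at least $m$ assignment edges must appear with nonempty residual (as they are the only edges touching $S$). Each such assignment edge $E_{i'j'}$ that is \emph{properly} opened contributes at least $2^{t+B}$ via its token $\tau_{i'j'}$ (which appears in no other edge, so $\tau_{i'j'} \in U(E_{i'j'})$ always) plus its new $S$-labels; and our offending edge $E_{ij}$ alone contributes $\ge 2^w$. Adding the offending edge's cost $2^w$ to the unavoidable cost of the other openings, I would show the total exceeds $C$. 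Concretely, even the single offending contribution $2^w$ together with the minimum cost $m \cdot 2^{t+B}$ forced by the $m$ closing tokens $\tau$ already yields $F(\mathcal{E}') \ge 2^w + m \cdot 2^{t+B} > m \cdot 2^{t+B} + m \cdot 2^{t+B}$... so I must be slightly more careful and instead combine $2^w > m 2^{t+B}$ with the $m$ opening costs, giving $F \ge m 2^w + 2^{t+B}$ or an analogous bound that beats $C$ by the $(m+1)2^w > C$ inequality.

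The main obstacle will be making the counting of forced costs airtight without double-counting: I need to ensure that the $2^w$ charged to the misordered $E_{ij}$ is genuinely \emph{additional} to the costs forced by covering $S$ and by the closing tokens, so that summing them cannot be absorbed within the budget. The subtlety is that a misordered $E_{ij}$ might itself be one of the $m$ edges needed to cover $S$, so I cannot naively add $2^w$ on top of a full canonical cost $C$. The resolution I would pursue is to observe that if $E_{ij}$ is charged $2^w$ (rather than merely $2^{t+B}$), then it is strictly more expensive than in any canonical covering; formally $f(E_{ij}) \ge 2^w > m \cdot 2^{t+B} \ge$ the \emph{entire} closing-token budget, so replacing the canonical $2^{t+B}$ contribution of this edge by $2^w$ overshoots $C$ by the margin guaranteed by $(m+1)2^w > C$. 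I would formalize this by a clean global inequality: the covering contains at least $m$ opening contributions of size $2^w$ in the well-ordered case, but here one of those is ``missing'' and replaced by an inflated assignment edge, and the strict inequality $2^w > m 2^{t+B}$ forces the total past $C$. Pinning down this replacement/exchange bookkeeping precisely is the crux of the lemma.
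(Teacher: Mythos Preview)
Your setup is right --- assume some active $E_{ij}$ is not preceded by $A_{ij}$, deduce $\omega_{ij}\in U(E_{ij})$ --- but the cost accounting that follows has a real gap, and one of your intermediate claims is false.

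First, the claim that each properly opened active assignment edge ``contributes at least $2^{t+B}$'' is not justified: the residual $U(E_{i'j'})\cap S$ can be a strict subset of $X_{j'}$ (some of its three labels may already have been covered), so all you can say from $\tau_{i'j'}\in U(E_{i'j'})$ is $u(E_{i'j'})\ge t$, not $\ge t+B$. Consequently the bounds you try, such as $2^w + m\cdot 2^{t+B}$ or $m\cdot 2^w + 2^{t+B}$, do not exceed $C=m\cdot 2^w + m\cdot 2^{t+B}$, and the ``exchange'' you describe at the end is never made precise.

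Two small additions close the argument cleanly. (i) For the offending edge, include $\tau_{ij}$ as well: since $\tau_{ij}$ appears only in $E_{ij}$, you get $u(E_{ij})\ge w+t$, hence $f(E_{ij})\ge 2^{w+t}\ge 2\cdot 2^w$. (ii) For every \emph{other} active assignment edge $E_{i'j'}$ (there are at least $m-1$ of them, since at least $m$ assignment edges are needed to cover $|S|=3m$ labels), charge $2^w$ to a distinct term of $F$: if $A_{i'j'}$ precedes it, that opening edge costs exactly $2^w$; if not, then $\omega_{i'j'}\in U(E_{i'j'})$ and $f(E_{i'j'})\ge 2^w$. These charged terms are pairwise distinct, so
\[
F(\mathcal{E}')\;\ge\;(m-1)\cdot 2^w + 2\cdot 2^w \;=\;(m+1)\cdot 2^w \;>\; C,
\]
which is exactly the separation inequality you set up. This is the accounting the paper uses; your proposal was circling it but never landed on the charge ``$2^w$ per active assignment edge, with the offending one worth $2\cdot 2^w$.''
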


\begin{proof}
We argue by contradiction. Suppose there is an assignment edge $E_{ij}$ with positive residual such that its opening edge $A_{ij}$ does not appear before $E_{ij}$ in $\mathcal{E}'$.

Since only assignment edges contain labels from $S$, and $|S| = 3m$ while each assignment edge contains at most three labels from $S$, at least $m$ assignment edges must have positive residuals; let $r \geq m$ denote their number.

When $E_{ij}$ is applied without $A_{ij}$ having been applied earlier, both $\omega_{ij}$ and $\tau_{ij}$ are uncovered, so $\omega_{ij}, \tau_{ij} \in U(E_{ij})$ and
$$u(E_{ij}) \geq \text{val}(\omega_{ij}) + \text{val}(\tau_{ij}) = w + t,$$
which implies $f(E_{ij}) \geq 2^{w+t}$.

For each of the remaining $(r-1)$ assignment edges with positive residuals, the covering must pay at least $2^w$: if such an edge is properly opened, its opening edge costs $2^w$, and if it is not, its cost is even larger. Thus,
$$F(\mathcal{E}') \geq (r-1) \cdot 2^w + 2^{w+t}.$$

Since $t \geq 1$, we have $2^{w+t} \geq 2^{w+1} = 2 \cdot 2^w$, so
$$(r-1) \cdot 2^w + 2^{w+t} \geq (r+1) \cdot 2^w \geq (m+1) \cdot 2^w.$$

By our choice of $w$, $(m+1) \cdot 2^w > C$, so $F(\mathcal{E}') > C$, contradicting the assumption that $F(\mathcal{E}') \leq C$.
\end{proof}

\begin{lemma}[Exactly $m$ Active Assignment Edges]\label{lem:exactly-m}
In any covering $\mathcal{E}'$ with $F(\mathcal{E}') \leq C$, exactly $m$ assignment edges appear with positive residuals.
\end{lemma}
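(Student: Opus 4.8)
The plan is to establish a matching pair of inequalities. First I would show that the number of active assignment edges (those with positive residual) is at least $m$, and then that it is at most $m$, forcing equality. The lower bound is essentially already available from the counting argument in Lemma~\ref{lem:opening}: since only assignment edges contain labels from $S$, since $|S| = 3m$, and since each assignment edge contributes at most three labels from $S$ to its residual, at least $\lceil 3m/3 \rceil = m$ assignment edges must carry positive residual in order to cover all of $S$. This direction is immediate and requires no budget argument.

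The real content is the upper bound, and this is where I expect the main obstacle. Let $r$ denote the number of active assignment edges; we want $r \le m$. The key observation is that by Lemma~\ref{lem:opening}, each of these $r$ active assignment edges is preceded by its own opening edge $A_{ij}$. These opening edges are all distinct (each $A_{ij} = \{\omega_{ij}\}$ is a singleton tied to a distinct token $\omega_{ij}$), and when $A_{ij}$ is applied it covers $\omega_{ij}$ for the first time, so its residual weight is exactly $w$ and it contributes cost $2^w$. Thus we get at least $r$ contributions of $2^w$ from opening edges, plus at least $r$ contributions from the assignment edges themselves. Each active assignment edge, being properly opened, still covers its closing token $\tau_{ij}$ (appearing only in $E_{ij}$) together with at least one new label from $S$, so its residual weight is at least $t + 1 \ge 1$, giving cost at least $2$. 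A cruder but sufficient bound is simply $F(\mathcal{E}') \ge r \cdot 2^w$, since each active assignment edge forces one opening edge of cost $2^w$.

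From here I would complete the argument by comparison with the budget. We have $F(\mathcal{E}') \ge r \cdot 2^w$, and by hypothesis $F(\mathcal{E}') \le C = m(2^w + 2^{t+B})$. If $r \ge m+1$, then $F(\mathcal{E}') \ge (m+1) \cdot 2^w$, and by the defining inequality $(m+1) \cdot 2^w > C$ established in the choice of $w$, we would obtain $F(\mathcal{E}') > C$, a contradiction. Hence $r \le m$. Combined with the lower bound $r \ge m$, this yields $r = m$ exactly.

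The subtle point to handle carefully — and the most likely source of error — is ensuring that the $r$ opening edges counted are genuinely distinct and each genuinely contributes $2^w$. One must verify that two distinct active assignment edges $E_{ij}$ and $E_{i'j'}$ cannot share the same preceding opening edge: this holds because $\omega_{ij}$ appears in exactly the two edges $A_{ij}$ and $E_{ij}$, so the opening edge associated to $E_{ij}$ is uniquely $A_{ij}$. I would also note that even if some opening edges appear in the covering without triggering a positive residual (e.g.\ a redundant repeated opening), this only \emph{increases} the cost, so the lower bound $F(\mathcal{E}') \ge r \cdot 2^w$ remains valid; the argument only needs that each active assignment edge is charged at least one fresh $2^w$ via its unique opening token, which is exactly what Lemma~\ref{lem:opening} guarantees.
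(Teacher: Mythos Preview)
Your proposal is correct and follows essentially the same approach as the paper: establish $r \ge m$ by the pigeonhole argument on covering $S$, then $r \le m$ by invoking Lemma~\ref{lem:opening} to charge each active assignment edge a cost of $2^w$ via its opening edge and contradicting the choice of $w$ when $r \ge m+1$. Your extra care in verifying that the $r$ opening edges are distinct and each genuinely contributes $2^w$ is a point the paper leaves implicit, so if anything your write-up is slightly more rigorous on that detail.
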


\begin{proof}
Let $r$ be the number of assignment edges with positive residuals in $\mathcal{E}'$.

\textbf{Lower bound:} Since only assignment edges contain labels from $S$, and $|S| = 3m$, and each assignment edge contains at most three labels from $S$, we need at least $m$ such edges: $r \geq m$.

\textbf{Upper bound:} By Lemma~\ref{lem:opening}, each of the $r$ assignment edges is preceded by its opening edge. Thus, the covering contains at least $r$ opening edges. The cost contribution from these openings is $r \cdot 2^w$.

The budget is $C = m(2^w + 2^{B+t})$. Since $F(\mathcal{E}') \leq C$, we have:
$$r \cdot 2^w \leq F(\mathcal{E}') \leq C = m(2^w + 2^{B+t}).$$

If $r \geq m+1$, then:
$$r \cdot 2^w \geq (m+1) \cdot 2^w > C,$$
by our choice of $w$, which is a contradiction.

Thus, $r \leq m$. Combined with $r \geq m$, we conclude $r = m$.
\end{proof}

\begin{lemma}[Triplet Coverage of $S$]\label{lem:triplet}
In any covering $\mathcal{E}'$ with $F(\mathcal{E}') \leq C$, the $m$ assignment edges with positive residuals have pairwise disjoint contributions to $S$, each of size three, whose union is exactly $S$.
\end{lemma}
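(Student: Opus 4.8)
The plan is to leverage the two preceding lemmas to pin down the structure completely. By Lemma~\ref{lem:exactly-m}, exactly $m$ assignment edges appear with positive residuals; call them $E_{i_1 j_1}, \ldots, E_{i_m j_m}$, and for each let $Y_\ell = U(E_{i_\ell j_\ell}) \cap S$ denote its contribution to $S$. Since only assignment edges contain labels from $S$ and the covering must cover all of $S$, we immediately have $\bigcup_\ell Y_\ell = S$. Moreover, residual sets are disjoint by construction (each label is counted in the residual of the \emph{first} edge that covers it), so the $Y_\ell$ are automatically pairwise disjoint. The only remaining claim is that each $|Y_\ell| = 3$, which I would establish by a counting argument: the $Y_\ell$ partition $S$, so $\sum_\ell |Y_\ell| = |S| = 3m$; since each assignment edge contains exactly three labels from $S$, we have $|Y_\ell| \le 3$ for every $\ell$; and $m$ nonnegative integers each at most $3$ summing to $3m$ forces every one to equal $3$.

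The one subtlety to address is why each $Y_\ell$ is nonempty (equivalently, why no $|Y_\ell| = 0$ slips through), since the averaging argument above already handles this once we know the sum is exactly $3m$ over exactly $m$ terms bounded by $3$. Concretely, if some $|Y_\ell| < 3$, then to keep the total at $3m$ another term would need to exceed $3$, which is impossible. Thus all terms are exactly $3$. I would state this cleanly as the pigeonhole/extremal step: equality in $\sum_\ell |Y_\ell| = 3m$ with each summand capped at $3$ and exactly $m$ summands is only achievable when each summand is maximal.

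The main obstacle — though it is really a bookkeeping point rather than a deep difficulty — is ensuring that the ``contribution to $S$'' is well defined and that disjointness of the $Y_\ell$ follows from the definition of residual sets rather than requiring a separate argument. The key observation is that for distinct edges in the sequence, their residual sets $U_i$ are pairwise disjoint by definition (an element enters $U_i$ only if it was \emph{not} covered by any earlier $E'_j$), so restricting to $S$ preserves disjointness with no extra work. I would also note that each such $Y_\ell$ is in fact $X_{j_\ell}$ itself: because Lemma~\ref{lem:opening} guarantees the opening edge $A_{i_\ell j_\ell}$ precedes $E_{i_\ell j_\ell}$, the token $\omega_{i_\ell j_\ell}$ is already covered, but the three labels of $X_{j_\ell}$ must all be new (they cannot have been covered by any earlier assignment edge, else that edge would have contributed them and reduced the count below $3m$). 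This last refinement is what will let the subsequent argument conclude that each $X_{j_\ell} \in T$ has weight exactly $B$, yielding the desired $3$-partition; for the present lemma, however, only the disjoint triplet-cover claim is needed, so I would keep the proof to the counting step and defer the weight argument to the main soundness proof.
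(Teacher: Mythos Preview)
Your proposal is correct and matches the paper's proof essentially line for line: both invoke Lemma~\ref{lem:exactly-m} to get exactly $m$ active assignment edges, use the built-in disjointness of residual sets, and finish with the same counting step ($m$ disjoint subsets of $S$, each of size at most $3$, covering all $3m$ labels, hence each of size exactly $3$). Your extra remark that $Y_\ell = X_{j_\ell}$ is also correct and is precisely what the paper invokes later in the soundness proof rather than here.
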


\begin{proof}
By Lemma~\ref{lem:exactly-m}, there are exactly $m$ assignment edges with positive residuals. By construction, only assignment edges contain labels from $S$, and each such edge $E_{ij}$ contains exactly three labels from $S$ (the labels in $X_j$). Thus, for each active assignment edge,
$$|U(E_{ij}) \cap S| \leq 3.$$

Residual sets in a covering are disjoint by definition ($U_i \cap U_j = \emptyset$ for $i \neq j$), so the sets $U(E_{ij}) \cap S$ corresponding to the $m$ active assignment edges are pairwise disjoint. Hence their union contains at most $3m$ labels from $S$.

On the other hand, $|S| = 3m$, and since only assignment edges contain labels from $S$ and $\mathcal{E}'$ is a covering, every label in $S$ must belong to one of these residual sets. Therefore the union of the $U(E_{ij}) \cap S$ has size exactly $3m$, which forces $|U(E_{ij}) \cap S| = 3$ for each active assignment edge and shows that their union is $S$.
\end{proof}

\begin{proof}[Proof of Theorem~\ref{thm:soundness}]
Suppose $\mathcal{E}' = (E'_1, \ldots, E'_k)$ is a covering with $F(\mathcal{E}') \leq C$.

By Lemma~\ref{lem:exactly-m}, there are exactly $m$ assignment edges with positive residuals. Enumerate them as $H_1, \ldots, H_m$.

By Lemma~\ref{lem:opening}, each $H_p$ is preceded by its opening edge in $\mathcal{E}'$.

By Lemma~\ref{lem:triplet}, the contributions of these $m$ assignment edges to $S$ are pairwise disjoint, each of size three, and their union is $S$.

For each $p \in [m]$, let $Q_p = U(H_p) \cap S$ be the three labels from $S$ contributed by $H_p$. By the construction of the reduction, there exists a triplet $X_p \in T$ such that
$$H_p = X_p \cup \{\omega_p, \tau_p\},$$
where $X_p$ is a three-element subset of $S$ with total weight $B$.

The only labels of $S$ contained in $H_p$ are those in $X_p$, so $Q_p$ and $X_p$ are both three-element subsets of $S$ contained in $H_p$. Since $Q_p$ consists precisely of the labels from $S$ first covered by $H_p$, we must have $Q_p = X_p$.

Because the sets $Q_1, \ldots, Q_m$ are pairwise disjoint and their union is $S$ (Lemma~\ref{lem:triplet}), the same holds for $X_1, \ldots, X_m$. Hence $\{X_1, \ldots, X_m\}$ forms a partition of $S$ into $m$ triplets, each of total weight $B$.

Since the elements of $S$ are labels for the elements of $A$, we translate back: for each $p$, the three labels in $X_p$ correspond to three elements of $A$ whose sum is $B$. The resulting sets $P_1, \ldots, P_m$ are pairwise disjoint and cover $A$, so $\mathcal{P} = \{P_1, \ldots, P_m\}$ is a valid 3-partition of $A$.
\end{proof}

\subsection{Conclusion of the Proof}

We have now established both directions:
\begin{enumerate}
\item \textbf{Theorem 3 (Completeness):} If $(A, B) \in \text{YES}_{3\text{-Partition}}$, then $\Phi(A, B) \in \text{YES}_{\text{OCP}}$.
\item \textbf{Theorem 4 (Soundness):} If $\Phi(A, B) \in \text{YES}_{\text{OCP}}$, then $(A, B) \in \text{YES}_{3\text{-Partition}}$.
\end{enumerate}

Moreover, the reduction $\Phi$ is computable in polynomial time, and the OCP instance has polynomial encoding size (weights are polynomially bounded; the budget is exponentially large in value but has polynomial bit-length).

\begin{proof}[Proof of Theorem 2]
By Subsection 3.1, OCP is in NP. By Subsections 3.2-3.4, there is a polynomial-time reduction from 3-Partition to OCP. Since 3-Partition is NP-complete \cite{garey1979computers}, it follows that OCP is NP-complete.
\end{proof}

\begin{remark}
This reduction does not establish strong NP-completeness of OCP. Determining whether OCP is strongly NP-complete remains an open question.
\end{remark}

\section{Examples}

In this section, we present two small instances (testA and testB) to illustrate the problem.
Figure~\ref{fig:graph_small} shows the hypergraphs associated with these small instances, where elements of $\mathcal{E}$ are represented as red circles and labels in $S$ are represented as blue squares. Each square has a label $s_i$ and a value $v_i$ (highlighted in yellow).

\textbf{Instance testA:}
\begin{align*}
S_A &= \{s_1, s_2, s_3, s_4, s_5, s_6, s_7\}, \\
\text{val}_A &= \{s_1 \mapsto 8, s_2 \mapsto 2, s_3 \mapsto 4, s_4 \mapsto 4, s_5 \mapsto 2, s_6 \mapsto 2, s_7 \mapsto 4\}, \\
\mathcal{E}_A &= \{E_1, E_2, E_3, E_4\}, \text{ where:} \\
E_1 &= \{s_1, s_2\}, \\
E_2 &= \{s_2, s_3, s_4, s_5\}, \\
E_3 &= \{s_4, s_5, s_6, s_7\}, \\
E_4 &= \{s_5, s_6\}.
\end{align*}

\textbf{Instance testB:}
\begin{align*}
S_B &= \{s_1, s_2, s_3, s_4, s_5\}, \\
\text{val}_B &= \{s_1 \mapsto 4, s_2 \mapsto 2, s_3 \mapsto 8, s_4 \mapsto 2, s_5 \mapsto 2\}, \\
\mathcal{E}_B &= \{E_1, E_2, E_3, E_4\}, \text{ where:} \\
E_1 &= \{s_1, s_2\}, \\
E_2 &= \{s_2, s_3, s_4, s_5\}, \\
E_3 &= \{s_3, s_4\}, \\
E_4 &= \{s_4, s_5\}.
\end{align*}

We analyzed these instances to determine the optimal ordered covering and compared it with a dynamic greedy heuristic. The greedy heuristic constructs the covering by iteratively selecting the subset $E \in \mathcal{E}$ that minimizes the residual weight (and thus the immediate cost).

For \textbf{testA}, the greedy algorithm successfully finds the optimal solution. The optimal sequence is $(E_4, E_3, E_2, E_1)$ with total cost 592.

However, the greedy strategy fails for \textbf{testB}. In this instance, the optimal sequence is $(E_4, E_3, E_2, E_1)$ with cost $2^{4} + 2^{8} + 2^{2} + 2^{4} = 292$.
The greedy algorithm, after selecting $E_4$ (cost $2^{4}$), selects $E_1$ (residual weight $6$, cost $2^{6}$) instead of $E_3$ (residual weight $8$, cost $2^{8}$), leading to a sequence $(E_4, E_1, E_2)$ with total cost $2^{4} + 2^{6} + 2^{2} + 2^{8} = 336$. The greedy choice of $E_1$ is locally cheaper than $E_3$, but it fails to reduce the cost of covering $s_2$ as effectively as the optimal sequence does.

These examples demonstrate that a locally optimal greedy strategy does not guarantee a globally optimal solution for the OCP, motivating the need for more sophisticated algorithms or exact approaches \cite{souza2024branch}.

\begin{figure}[h]
\centering
\includegraphics[width=0.8\textwidth]{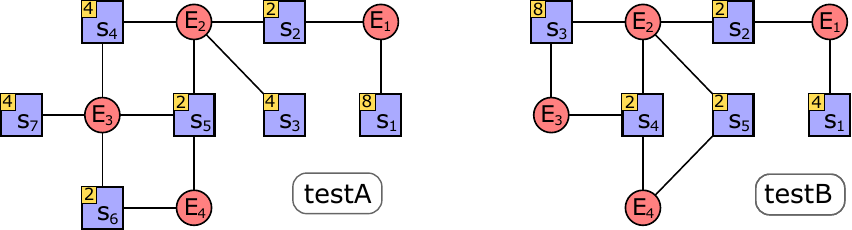}
\caption{Hypergraphs for small instances (testA and testB).}
\label{fig:graph_small}
\end{figure}

\section{Conclusion}

We have shown that the Ordered Covering Problem is NP-complete by giving a polynomial-time reduction from the strongly NP-complete 3-Partition problem. The construction uses opening and assignment edges together with a carefully chosen exponential cost separation to enforce that any feasible solution has exactly $m$ active assignment edges, each opened in advance and covering one triplet of labels, yielding a bijection between feasible coverings and valid 3-partitions.

This result complements existing NP-completeness results for vertex ordering problems in distance geometry and explains why heuristic and exact exponential-time methods are unavoidable for OCP. In particular, it provides a theoretical justification for the greedy ordering heuristic and Branch-and-Bound approaches developed for the SBBU algorithm, and it motivates further work on approximation guarantees, parameterized algorithms, and structural properties of practically arising OCP instances.

\section*{Acknowledgements}

We thank the Brazilian research agencies CNPq and FAPESP for financial support. We also thank the anonymous reviewers for their valuable comments and suggestions that improved the presentation of this paper.

\bibliographystyle{plain}
\bibliography{references}

\end{document}